\documentclass[onecolumn, a4size, 12pt]{IEEEtran}
\usepackage{amsmath}
\usepackage{amssymb}
\usepackage{amsfonts}
\usepackage{graphicx}
\usepackage{epsfig}
\usepackage{subfigure}
\usepackage{psfrag}

\title{Exploiting Opportunistic Multiuser Detection in Decentralized
Multiuser MIMO Systems \footnote{R. Zhang is with the Institute for
Infocomm Research, A*STAR, Singapore
(e-mail:rzhang@i2r.a-star.edu.sg).} \footnote{J. M. Cioffi is with
the Department of Electrical Engineering, Stanford University,
Stanford, USA (e-mail:cioffi@stanford.edu).}}

\author{Rui Zhang and John M. Cioffi}

\setlength{\textwidth}{7.1in} \setlength{\textheight}{9.7in}
\setlength{\topmargin}{-0.8in} \setlength{\oddsidemargin}{-0.30in}

\begin{document}
\maketitle \thispagestyle{empty}

\begin{abstract}
This paper studies the design of a \emph{decentralized} multiuser
multi-antenna (MIMO) system for spectrum sharing over a fixed narrow
band, where the coexisting users independently update their transmit
covariance matrices for individual transmit-rate maximization via an
iterative manner. This design problem was usually investigated in
the literature by assuming that each user treats the co-channel
interference from all the other users as additional (colored) noise
at the receiver, i.e., the conventional \emph{single-user decoder}
(SUD) is applied. This paper proposes a new decoding method for the
decentralized multiuser MIMO system, whereby each user
opportunistically cancels the co-channel interference from some or
all of the other users via applying multiuser detection techniques,
thus termed \emph{opportunistic multiuser detection} (OMD). This
paper studies the optimal transmit covariance design for users'
iterative maximization of individual transmit rates with the
proposed OMD, and demonstrates the resulting capacity gains in
decentralized multiuser MIMO systems against the conventional SUD.
\end{abstract}

\begin{keywords}
Cognitive radio, decentralized multiuser system, MIMO Gaussian
interference channel, multiuser detection.
\end{keywords}

\setlength{\baselineskip}{1.3\baselineskip}
\newtheorem{definition}{\underline{Definition}}[section]
\newtheorem{fact}{Fact}
\newtheorem{assumption}{Assumption}
\newtheorem{theorem}{\underline{Theorem}}[section]
\newtheorem{lemma}{\underline{Lemma}}[section]
\newtheorem{corollary}{Corollary}
\newtheorem{proposition}{\underline{Proposition}}[section]
\newtheorem{example}{\underline{Example}}[section]
\newtheorem{remark}{\underline{Remark}}[section]
\newtheorem{algorithm}{\underline{Algorithm}}[section]
\newcommand{\mv}[1]{\mbox{\boldmath{$ #1 $}}}

\section{Introduction}

The Gaussian interference channel is a basic mathematical model that
characterizes many real-life communication systems with multiple
uncoordinated users sharing a common spectrum to transmit
independent information at the same time, such as the digital
subscriber line (DSL) network \cite{Cioffi}, the ad-hoc wireless
network \cite{Andrews}, and the newly emerging cognitive radio (CR)
wireless network \cite{Tarokh}. From an information-theoretical
perspective, the capacity region of the Gaussian interference
channel, which constitutes all the simultaneously achievable rates
of the users in the system, is still unknown in general \cite{Han},
while significant progresses have recently been made on approaching
this limit \cite{Tse}, \cite{Jafar}. Capacity-approaching techniques
usually require certain cooperations among distributed users for
their encoding and decoding. A more pragmatic approach that leads to
suboptimal achievable rates of the users in the Gaussian
interference channel is to restrict the system to operate in a
decentralized manner \cite{Yu02}, i.e., allowing only single-user
encoding and decoding by treating the co-channel interference from
the other users as additional Gaussian noise at each user's
receiver. In such a context, decentralized algorithms for users to
allocate their transmit resources such as the power, bit-rate,
bandwidth, and antenna beam to optimize individual transmission
performance and yet to ensure certain fairness among all the users,
become most important.

This paper focuses on a multiuser multiple-input multiple-output
(MU-MIMO) wireless system, where multiple distributed links, each
equipped with multiple transmit and/or receive antennas, share a
common narrow band for transmission in a fully decentralized manner.
In such a scenario, the system design reduces to finding a set of
transmit covariance matrices for the users subject to their
co-channel interference resulting from their simultaneous and
uncoordinated transmissions. This design problem has been
investigated in a vast number of prior works in the literature,
e.g., \cite{Ingram}-\cite{Palomar}, by treating the co-channel
interference as additional colored noise at each user's receiver,
i.e., the conventional {\it single-user decoder} (SUD) for the
classic point-to-point MIMO channel is applied. In \cite{Ingram},
the authors proposed an algorithm, which is in spirit analogous to
the iterative water-filling (IWF) algorithm in \cite{Yu02}, for each
distributed MIMO link to iteratively update transmit covariance
matrix to maximize individual transmit rate. Distributed iterative
beamforming (the rank of transmit covariance matrix is restricted to
be one) algorithms were also studied in \cite{Tassiulas} for
transmit sum-power minimization given individual user's quality of
service (QoS) constraint in terms of the received
signal-to-interference-plus-noise ratio (SINR). The throughput of
decentralized MU-MIMO systems has been further analyzed in
\cite{Blum03a} and \cite{Chen06} for the cases of fading channels
and large-size systems, respectively. In \cite{Blum03b},
\cite{Larsson}, centralized strategies were proposed where all
users' transmit covariance matrices are jointly searched to maximize
their sum-rate, and numerical algorithms were also proposed to
converge to a local sum-rate maxima. Analyzing the decentralized
MU-MIMO system via a game theoretical approach has recently been
done in \cite{Liang}-\cite{Palomar}.

The cited papers on decentralized/centralized designs for the
Gaussian MIMO interference channel have all adopted the SUD at each
user's receiver, whereas during the past decade multiuser detection
techniques (see, e.g., \cite{Verdu} and references therein) have
been thoroughly investigated in the literature, and have been proven
in realistic multiuser/MIMO systems to be able to provide
substantial performance gains over the conventional SUD. This
motivates our work's investigation of the following question:
Considering a decentralized MU-MIMO system where the users
iteratively adapt their transmit covariance matrices for individual
rate maximization, ``Is applying multiuser detection at each user's
receiver able to enhance the system throughput over the conventional
SUD?'' Note that because of the randomness of channels among the
users, as well as their independent rate assignments, at one
particular user's receiver, multiuser detection can be used to
cancel the co-channel interference from some/all of its coexisting
users only when their received signals are jointly decodable with
this particular user's own received signal. Thus, we refer to this
decoding method as {\it opportunistic multiuser detection} (OMD).
Also note that the OMD in the context of the decentralized MU-MIMO
system is analogous to the ``successive group decoder (SGD)'' in the
fading multiple-access channel (MAC) with unknown channel state
information (CSI) at the user transmitters (see, e.g., \cite{Wang08}
and references therein). With the proposed OMD, this paper derives
the optimal transmit covariance matrix for user's individual
transmit-rate maximization at each iteration of transmit adaptation.
By simulation, this paper demonstrates the throughput gains of the
converged users' transmit covariance matrices with the proposed OMD
over the conventional SUD.

The rest of this paper is organized as follows. Section
\ref{sec:system model} presents the system model of the
decentralized MU-MIMO system. Section \ref{sec:2 users} studies the
optimal design of user transmit covariance matrix with the proposed
OMD for the special case with two users in the system. Section
\ref{sec:K users} generalizes the results to the case of more than
two users. Section \ref{sec:numerical results} provides the
simulation results to demonstrate the throughput gains with the
proposed OMD over the SUD. Finally, Section \ref{sec:conclusion}
concludes the paper.

{\it Notation}: Scalars are denoted by lower-case letters, e.g.,
$x$, and bold-face lower-case letters are used for vectors, e.g.,
$\mv{x}$, and bold-face upper-case letters for matrices, e.g.,
$\mv{X}$. In addition, $\mathtt{tr}(\mv{S})$, $|\mv{S}|$,
$\mv{S}^{-1}$, and $\mv{S}^{\frac{1}{2}}$ denote the trace,
determinant, inverse, and square-root of a square matrix $\mv{S}$,
respectively, and $\mv{S}\succeq 0$ means that $\mv{S}$ is a
positive semi-definite matrix \cite{Boyd}. For an arbitrary-size
matrix $\mv{M}$, $\mv{M}^{H}$ denotes the conjugate transpose of
$\mv{M}$. $\mathtt{diag}(x_1, \ldots, x_M)$ denotes a $M \times M$
diagonal matrix with $x_1,\ldots,x_M$ as its diagonal elements.
$\mv{I}$ and $\mv{0}$ denote the identity matrix and the all-zero
vector, respectively. $\mathbb{E}[\cdot]$ denotes the statistical
expectation. The distribution of a circular symmetric complex
Gaussian (CSCG) random vector with mean $\mv{x}$ and covariance
matrix $\mv{\Sigma}$ is denoted by
$\mathcal{CN}(\mv{x},\mv{\Sigma})$, and $\sim$ stands for
``distributed as''. $\mathbb{C}^{x \times y}$ denotes the space of
$x\times y$ matrices with complex-valued elements. $\max(x,y)$ and
$\min(x,y)$ denote the maximum and minimum between two real numbers,
$x$ and $y$, respectively, and $(x)^+=\max(x,0)$.

\section{System Model}\label{sec:system model}

This paper considers a distributed MU-MIMO system where $K$ users
transmit independent information to their corresponding receivers
simultaneously over a common narrow band. Each user is equipped with
multiple transmit and/or receiver antennas, while for user $k$,
$k=1,\ldots,K$, $N_k$ and $M_k$ denote the number of its transmit
and receive antennas, respectively. For the time being, it is
assumed that perfect time and frequency synchronization with
reference to a common clock system have been established for all the
users in the system prior to data transmission. We also assume a
{\it block-fading} model for all the channels involved in the
system, and a block-based transmission for all the users over each
particular channel fading state. Since the proposed study applies to
any channel fading state, for brevity we drop the index of fading
state here. The discrete-time baseband signal for the $k$th user
transmission is given by
\begin{align}\label{eq:signal model}
\mv{y}_{k}=\mv{H}_{kk}\mv{x}_k+\sum_{j=1,j\neq
k}^K\mv{H}_{jk}\mv{x}_j+\mv{z}_k
\end{align}
where $\mv{x}_k\in\mathbb{C}^{N_k\times 1}$ and
$\mv{y}_k\in\mathbb{C}^{M_k\times 1}$ are the transmitted and
received signal vectors for user $k$, respectively,
$k\in\{1,\ldots,K\}$; $\mv{H}_{kk}\in\mathbb{C}^{M_k\times N_k}$
denotes the direct-link channel matrix for user $k$, while
$\mv{H}_{jk}\in\mathbb{C}^{M_k\times N_j}$ denotes the cross-link
channel matrix from user $j$ to user $k$, $j\in\{1,\ldots,K\}$,
$j\neq k$; and  $\mv{z}_k\in\mathbb{C}^{M_k \times 1}$ is the
received noise vector of user $k$.

Without loss of generality, it is assumed that
$\mv{z}_k\sim\mathcal{CN}(\mv{0},\mv{I}), \forall
k\in\{1,\ldots,K\}$, and all $\mv{z}_k$'s are independent. We
consider a decentralized multiuser system where the $K$ users
independently encode their transmitted messages and thus
$\mv{x}_k$'s are independent over $k$. Since this paper is
interested in the information-theoretic limit of each Gaussian MIMO
channel involved, it is assumed that
$\mv{x}_k\sim\mathcal{CN}(\mv{0},\mv{S}_k), \forall
k\in\{1,\ldots,K\}$, where $\mv{S}_k=\mathbb{E}[\mv{x}_k\mv{x}_k^H]$
is the transmit covariance matrix for user $k$.

This paper considers a similar decentralized operation protocol as
in \cite{Yu02}, \cite{Ingram}, \cite{Liang}-\cite{Palomar}, whereby
the users in the system take turns to update their transmit
covariance matrices for individual rate maximization, with all the
other users' transmit covariance matrices being fixed, until all
users' transmit covariance matrices and their transmit rates get
converged. We consider two types of decoding methods at each user's
receiver. One is the conventional SUD, which has been applied in the
above cited papers, where the $k$th user decodes its desired message
by treating the co-channel interference from all the other users,
$j\neq k$, as additional colored Gaussian noise
$\sim\mathcal{CN}(\mv{0},\sum_{j=1,j\neq
k}^K\mv{H}_{jk}\mv{S}_j\mv{H}_{jk}^H)$. The other decoding method is
the newly proposed OMD, whereby each user opportunistically applies
multiuser detection to decode some/all of its coexisting users'
messages so as to cancel their resulted interference, provided that
these messages are jointly decodable with this user's own message.
In practice, each user in the system is usually interfered with by
all the other users, while due to location-dependent
shadowing/fading, only a small group of coexisting users who are
closest to one particular user and thus correspond to the strongest
cross-link channels to this user, will contribute the most to this
user's received co-channel interference. As a result, this user can
effectively estimate the transmit rates as well as the cross-link
channels of these ``strong'' interference users, and employ the
proposed OMD to suppress their interference at the receiver. Note
that the use of OMD instead of SUD still maintains the fully
decentralized property of the existing IWF-like operation protocols
given in \cite{Yu02}, \cite{Ingram}, \cite{Liang}-\cite{Palomar}.

\section{Transmit Covariance Optimization: The Two-User Case} \label{sec:2 users}

In this section, we present the problem formulation as well as the
solution to determine the optimal transmit covariance matrix of each
user for individual transmit-rate maximization, when the proposed
OMD is employed. For the purpose of exposition, we consider the
special case where only two users exist in the system. We will
address the general case with more than two users in Section
\ref{sec:K users}. For brevity, only user 1's transmit adaptation is
addressed here, while the developed results apply similarly to user
2.

\subsection{Problem Formulation}

Note that at one particular iteration of user 1 to update its
transmission, user 2's transmit covariance matrix, $\mv{S}_2$, and
transmit rate, denoted by $r_2$, are both fixed values. For a given
transmit covariance matrix of user 1, $\mv{S}_1$, the resultant
maximum transmit rate of user 1 can be expressed as
\begin{equation}\label{eq:rate}
r_1(\mv{S}_1)=\left\{
\begin{array}{ll} \log\left|\mv{I}+\mv{H}_{11}\mv{S}_1\mv{H}_{11}^H\right| &
r_2\leq R_2^{(a)}  \\
\log\left|\mv{I}+\mv{H}_{11}\mv{S}_1\mv{H}_{11}^H+\mv{H}_{21}\mv{S}_2\mv{H}_{21}^H\right|-r_2
& R_2^{(a)}<r_2\leq R_2^{(b)}
\\ \log\left|\mv{I}+(\mv{I}+\mv{H}_{21}\mv{S}_2\mv{H}_{21}^H)^{-1}\mv{H}_{11}\mv{S}_1\mv{H}_{11}^H\right| & r_2> R_2^{(b)}
\end{array} \right.
\end{equation}
where
\begin{align}
R_2^{(a)}&=\log\left|\mv{I}+(\mv{I}+\mv{H}_{11}\mv{S}_1\mv{H}_{11}^H)^{-1}\mv{H}_{21}\mv{S}_2\mv{H}_{21}^H\right|
\label{eq:R(1)}
\\
R_2^{(b)}&=\log\left|\mv{I}+\mv{H}_{21}\mv{S}_2\mv{H}_{21}^H\right|.
\label{eq:R(2)}
\end{align}

The above result is illustrated in the following three cases
corresponding to the three expressions of $r_1$ in (\ref{eq:rate})
from top to bottom.

\begin{itemize}
\item {\it Strong Interference Case}: In this case, the received
signal from user 2 is decodable at user 1's receiver with the
conventional SUD, by treating user 1's signal as colored Gaussian
noise. This is feasible since $r_2\leq R_2^{(a)}$ given in
(\ref{eq:R(1)}). After decoding user'2 message and thereby canceling
its associated interference, user 1 can decode its own message with
a maximum rate equal to its own channel capacity. The above decoding
method is known as {\it successive decoding} (SD) for the standard
Gaussian MAC \cite{Cover}.

\item {\it Moderate Interference Case}: In this case, $r_2> R_2^{(a)}$ and thus the received signal
from user 2 is not directly decodable by the SUD. However, since
$r_2\leq R_2^{(b)}$ given in (\ref{eq:R(2)}), it is still feasible
for user 1 to apply {\it joint decoding} (JD) \cite{Cover} to decode
both users' messages.\footnote{Note that SD can also be applied in
this case to achieve the same rate for user 1 as JD, if SD is
deployed jointly with the ``time sharing'' \cite{Cover} or ``rate
splitting'' \cite{Rimoldi} encoding technique at user 1's
transmitter. Since these techniques require certain cooperations
between users, they might not be suitable for the fully
decentralized multiuser system considered in this paper.} In this
case, the rate pair of the two users should lie on the $45$-degree
segment of the corresponding MAC capacity region boundary
\cite{Cover}, i.e.,
$r_1+r_2=\log\left|\mv{I}+\mv{H}_{11}\mv{S}_1\mv{H}_{11}^H+\mv{H}_{21}\mv{S}_2\mv{H}_{21}^H\right|$.

\item {\it Weak Interference Case}: In this case, $r_2> R_2^{(b)}$, i.e., the received signal from user 2 is not decodable even
without the presence of user 1's signal. As such, user 1's receiver
has the only option of treating user 2's signal as colored Gaussian
noise and applying the conventional SUD to directly decode user 1's
message, the same as that in the existing IWF-like algorithms (see,
e.g., \cite{Ingram}, \cite{Liang}-\cite{Palomar}).
\end{itemize}

In the above decoding method, multiuser detection is applied in both
cases of strong and moderate interferences when $r_2\leq R_2^{(b)}$,
but not in the case of weak interference when $r_2> R_2^{(b)}$.
Thus, user 1's receiver opportunistically applies multiuser
detection to decode user 2's message, either successively (SD) or
jointly (JD) with its own message. We thus refer to this decoding
method as {\it opportunistic multiuser detection} (OMD). From
(\ref{eq:R(1)}) and (\ref{eq:R(2)}), it follows that $R_2^{(a)}\leq
R_2^{(b)}$. Further more, it is easy to verify that $r_1$ given in
(\ref{eq:rate}) with the proposed OMD is in general larger than the
achievable rate with the conventional SUD (given by the third
expression of $r_1$ in  (\ref{eq:rate}) independent of $r_2$), for
any given set of $\mv{S}_1, \mv{S}_2$, and $r_2$.

With $r_1(\mv{S}_1)$ given in (\ref{eq:rate}) for a fixed
$\mv{S}_1$, we can further maximize user 1's transmit rate by
searching over $\mv{S}_1$. Let $P_1$ denote the transmit power
constraint of user 1. This problem can be expressed as
\begin{align}
\mbox{(P1)}~~\mathop{\mathtt{max}}_{\mv{S}_1} & ~~~ r_1(\mv{S}_1)
\nonumber \\
\mathtt{s.t.} & ~~~ \mathtt{tr}(\mv{S}_1)\leq P_1, \mv{S}_1\succeq 0
\nonumber
\end{align}
where $r_1(\mv{S}_1)$ is given in (\ref{eq:rate}). The optimal
solution of $\mv{S}_1$ in (P1) and the corresponding maximum
transmit rate of user 1 are denoted by $\mv{S}_1^{\rm OMD}$ and
$r_1^{\rm OMD}$, respectively.

\subsection{Proposed Solution}

In this subsection, we study the solution of (P1) for the optimal
transmit covariance matrix of user 1, when the proposed OMD is
deployed at user 1' receiver. Note that although the constraints of
(P1) are convex, its objective function is not necessarily concave
due to the fact that $R_2^{(a)}$ given in (\ref{eq:R(1)}) is neither
convex nor concave function of $\mv{S}_1$. As a result, (P1) seems
to be non-convex at a first glance. In fact, (P1) is a convex
optimization problem after being transformed into a convex form, as
will be shown in this subsection. In the following, we will study
the solution of (P1) for two cases: $r_2> R_2^{(b)}$ and $r_2\leq
R_2^{(b)}$, for which the SUD and the multiuser decoding (MD) (in
the form of either SD or JD) should be used to achieve
$r_1({\mv{S}_1})$ given in (\ref{eq:rate}), respectively.

\subsubsection{$r_2>R_2^{(b)}$}

In this case, the SUD should be applied. Note that $R_2^{(b)}$ is a
constant unrelated to $\mv{S}_1$. Thus, the optimal $\mv{S}_1$ that
maximizes the third expression of $r_1(\mv{S}_1)$ in (\ref{eq:rate})
has the following structure \cite{Cover}:
\begin{equation}\label{eq:optimal S SUD}
\mv{S}_1^{\rm SUD}=\mv{V}\mv{\Lambda}\mv{V}^H
\end{equation}
where $\mv{V}\in\mathbb{C}^{N_1\times T_1}$ with $T_1=\min(N_1,M_1)$
is obtained from the singular-value decomposition (SVD) of the
equivalent channel of user 1 (after the noise whitening) expressed
as
\begin{equation}
(\mv{I}+\mv{H}_{21}\mv{S}_2\mv{H}_{21}^H)^{-\frac{1}{2}}\mv{H}_{11}=\mv{U}\mv{\Sigma}\mv{V}^H
\end{equation}
with $\mv{U}\in\mathbb{C}^{M_1\times T_1}$, $\mv{\Sigma}=\mathtt{
diag}(\sigma_1,\ldots,\sigma_{T_1})$, $\sigma_i\geq 0$,
$i=1,\ldots,T_1$, and $\mv{\Lambda}=\mathtt{
diag}(p_1,\ldots,p_{T_1})$ with $p_i$'s obtained from the standard
water-filling solution \cite{Cover}:
\begin{equation}
p_i=\left(\mu-\frac{1}{\sigma_i^2}\right)^+, \ \ i=1,\ldots,T_1,
\end{equation}
with $\mu$ being a constant to make $\sum_{i=1}^{T_1}p_i=P_1$. The
maximum rate of user 1 then becomes
\begin{equation}\label{eq:optimal rate SUD}
r_1^{\rm SUD}=\sum_{i=1}^{T_1}\log(1+\sigma_i^2p_i).
\end{equation}

\subsubsection{$r_2\leq R_2^{(b)}$}

In this case, the MD in the form of either SD or JD should be used.
In order to overcome the non-concavity of $r_1(\mv{S}_1)$ given in
(\ref{eq:rate}) due to $R_2^{(a)}$, we re-express the first two
expressions of $r_1(\mv{S}_1)$ in (\ref{eq:rate}) as
\begin{align}\label{eq:rate new}
r_1^{\rm
MD}(\mv{S}_1)=\min\left(\log\left|\mv{I}+\mv{H}_{11}\mv{S}_1\mv{H}_{11}^H\right|,
\log\left|\mv{I}+\mv{H}_{11}\mv{S}_1\mv{H}_{11}^H+\mv{H}_{21}\mv{S}_2\mv{H}_{21}^H\right|-r_2\right).
\end{align}

Thus, the maximum achievable rate of user 1 can be obtained as
\begin{equation}\label{eq:rate MD}
r_1^{\rm MD}=\max_{\mv{S}_1:\mathtt{tr}(\mv{S}_1)\leq P_1,
\mv{S}_1\succeq 0} r_1^{\rm MD}(\mv{S}_1).
\end{equation}
The maximization problem in (\ref{eq:rate MD}) can be explicitly
written as
\begin{align}
\mbox{(P2)}~~\mathop{\mathtt{max}}_{r_1, ~ \mv{S}_1} & ~~~ r_1
\nonumber \\
\mathtt{s.t.} & ~~~ r_1\leq
\log\left|\mv{I}+\mv{H}_{11}\mv{S}_1\mv{H}_{11}^H\right| \label{eq:rate constraint 1} \\
& ~~~ r_1\leq
\log\left|\mv{I}+\mv{H}_{11}\mv{S}_1\mv{H}_{11}^H+\mv{H}_{21}\mv{S}_2\mv{H}_{21}^H\right|-r_2
\label{eq:rate constraint 2} \\ & ~~~ r_1\geq 0,
\mathtt{tr}(\mv{S}_1)\leq P_1, \mv{S}_1\succeq 0. \label{eq:other
constraints}
\end{align}
The optimal solution of $r_1$ in (P2) will be $r_1^{\rm MD}$. Note
that (P2) is a convex optimization problem since its constraints
specify a convex set of $(r_1,\mv{S}_1)$. To solve (P2), we apply
the standard Lagrange duality method \cite{Boyd}. First, we
introduce two non-negative dual variables, $\mu_1$ and $\mu_2$,
associated with the two rate constraints (\ref{eq:rate constraint
1}) and (\ref{eq:rate constraint 2}), respectively, and write the
associated Lagrangian of (P2) as
\begin{align}\label{eq:Lagrangian}
\mathcal{L}(r_1,\mv{S}_1,\mu_1,\mu_2)=&r_1-\mu_1\left(r_1-\log\left|\mv{I}+\mv{H}_{11}\mv{S}_1\mv{H}_{11}^H\right|\right)
\nonumber
\\ & - \mu_2\left(r_1-\log\left|\mv{I}+\mv{H}_{11}\mv{S}_1\mv{H}_{11}^H+\mv{H}_{21}\mv{S}_2\mv{H}_{21}^H\right|+r_2\right)
\end{align}
By reordering the terms in (\ref{eq:Lagrangian}), we obtain
\begin{align}\label{eq:Lagrangian new}
\mathcal{L}(r_1,\mv{S}_1,\mu_1,\mu_2)=&(1-\mu_1-\mu_2)r_1
+\mu_1\log\left|\mv{I}+\mv{H}_{11}\mv{S}_1\mv{H}_{11}^H\right|
\nonumber \\ & +
\mu_2\log\left|\mv{I}+\mv{H}_{11}\mv{S}_1\mv{H}_{11}^H+\mv{H}_{21}\mv{S}_2\mv{H}_{21}^H\right|+\mu_2r_2.
\end{align}
The Lagrange dual function of (P2) is then defined as
\begin{eqnarray}\label{eq:Lagrange dual}
g(\mu_1,\mu_2)=\max_{(r_1,\mv{S}_1)\in\mathcal{A}}\mathcal{L}(r_1,\mv{S}_1,\mu_1,\mu_2)
\end{eqnarray}
where the set $\mathcal{A}$ specifies the remaining constraints of
(P2) given in (\ref{eq:other constraints}). The dual problem of
(P2), of which the optimal value is the same as that of
(P2),\footnote{It can be easily checked that the Slater's condition
holds for (P2) and thus the duality gap for (P2) is zero
\cite{Boyd}.} is defined as
\begin{eqnarray}\label{eq:dual problem}
\mbox{(P2-D)}~~\min_{\mu_1\geq 0,\mu_2\geq 0} g(\mu_1,\mu_2).
\end{eqnarray}

Let $r_1^*$ and $\mv{S}_1^*$ denote the optimal solutions of (P2).
Let $\mu_1^*$ and $\mu_2^*$ denote the optimal dual solutions of the
dual problem (P2-D). Next, we will present a key relationship
between $\mu_1^*$ and $\mu_2^*$ as follows.
\begin{lemma}\label{lemma:optimal mu}
In problem (P2-D), the optimal solutions satisfy that
$\mu_1^*+\mu_2^*=1$.
\end{lemma}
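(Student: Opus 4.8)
The plan is to examine the dual function $g(\mu_1,\mu_2)$ defined in (\ref{eq:Lagrange dual}) and show that unless $\mu_1^*+\mu_2^*=1$, the dual objective cannot attain a finite minimum, forcing the claimed equality at optimality. The key observation comes from the reordered Lagrangian (\ref{eq:Lagrangian new}): the variable $r_1$ appears only through the linear term $(1-\mu_1-\mu_2)r_1$, and the inner maximization in (\ref{eq:Lagrange dual}) is taken over the set $\mathcal{A}$, which by (\ref{eq:other constraints}) only imposes $r_1\geq 0$ (together with the constraints on $\mv{S}_1$). Crucially, there is no upper bound on $r_1$ within $\mathcal{A}$ alone, since the rate-constraint inequalities (\ref{eq:rate constraint 1}) and (\ref{eq:rate constraint 2}) have been dualized away.

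First I would split into cases according to the sign of the coefficient $(1-\mu_1-\mu_2)$. If $1-\mu_1-\mu_2>0$, then since $r_1$ can be taken arbitrarily large in $\mathcal{A}$, the supremum of $\mathcal{L}$ over $r_1$ is $+\infty$, so $g(\mu_1,\mu_2)=+\infty$. Such $(\mu_1,\mu_2)$ can never be a minimizer of (P2-D), because the dual optimal value is finite (it equals the primal optimal $r_1^{\rm MD}$, which is finite by strong duality, justified by the Slater's-condition footnote). Hence any optimal dual pair must satisfy $\mu_1^*+\mu_2^*\geq 1$. Next, if $1-\mu_1-\mu_2<0$, the coefficient of $r_1$ is negative, so the maximizing choice is $r_1=0$ (the boundary of $r_1\geq 0$); the linear term drops out and $g$ reduces to the maximization over $\mv{S}_1$ of the two weighted log-determinant terms plus $\mu_2 r_2$. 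I would then argue that this choice is strictly suboptimal for the dual minimization: one can show that decreasing $\mu_1+\mu_2$ toward $1$ (keeping, say, their ratio fixed) strictly decreases $g$, so no point with $\mu_1^*+\mu_2^*>1$ can minimize the dual either. Combining the two cases pins down $\mu_1^*+\mu_2^*=1$ exactly.

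An alternative and cleaner route, which I would likely prefer for rigor, is to invoke the KKT conditions directly rather than reasoning about unboundedness. Since (P2) is convex and Slater holds, the KKT conditions are necessary and sufficient at the optimum. Stationarity of the Lagrangian (\ref{eq:Lagrangian new}) with respect to the \emph{unconstrained} variable $r_1$ gives $\partial\mathcal{L}/\partial r_1 = 1-\mu_1^*-\mu_2^*=0$ at an interior optimum $r_1^*>0$. This yields $\mu_1^*+\mu_2^*=1$ immediately, provided one can rule out $r_1^*=0$; but $r_1^*=0$ would mean user 1 transmits at zero rate, which is not optimal whenever $\mv{H}_{11}\neq\mv{0}$ and $P_1>0$ (choosing any $\mv{S}_1\succ 0$ makes the right-hand side of (\ref{eq:rate constraint 1}) positive, so a strictly positive $r_1$ is feasible and improves the objective). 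Thus $r_1^*>0$, the $r_1$-stationarity is an equality, and the lemma follows.

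The main obstacle I anticipate is handling the boundary case $r_1^*=0$ and the non-smoothness of the feasible region where the two rate constraints interact. One must confirm that the optimum genuinely lies in the interior $r_1^*>0$ so that the stationarity condition holds with equality rather than as a one-sided inequality $1-\mu_1^*-\mu_2^*\leq 0$. Establishing $r_1^*>0$ rigorously requires only the mild nondegeneracy assumption that user 1's direct channel $\mv{H}_{11}$ is nonzero and $P_1>0$, which is implicit throughout; once this is secured, the complementary-slackness and stationarity arguments are routine, and the equality $\mu_1^*+\mu_2^*=1$ drops out cleanly.
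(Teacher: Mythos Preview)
Your proposal is correct and follows essentially the same approach as the paper. The paper's proof is precisely the two-case contradiction you outline: if $\mu_1^*+\mu_2^*<1$ the inner maximization over $r_1$ is unbounded, and if $\mu_1^*+\mu_2^*>1$ the Lagrangian maximizer forces $r_1^*=0$, which contradicts the existence of a feasible $(r_1,\mv{S}_1)$ with $r_1>0$---exactly the nondegeneracy check you isolate in your KKT route.
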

\begin{proof}
See Appendix \ref{appendix:proof optimal mu}.
\end{proof}

Given Lemma \ref{lemma:optimal mu}, without loss of generality, we
can replace $\mu_2$ by $1-\mu_1$ in (\ref{eq:Lagrangian new}). Thus,
the maximization problem in (\ref{eq:Lagrange dual}) can be
equivalently rewritten as (by discarding the constant term
$\mu_2r_2$)
\begin{align}
\mbox{(P3)}~~\mathop{\mathtt{max}}_{\mv{S}_1} & ~~~
\mu_1\log\left|\mv{I}+\mv{H}_{11}\mv{S}_1\mv{H}_{11}^H\right|+
(1-\mu_1)\log\left|\mv{I}+\mv{H}_{11}\mv{S}_1\mv{H}_{11}^H+\mv{H}_{21}\mv{S}_2\mv{H}_{21}^H\right|
\nonumber \\
\mathtt{s.t.} & ~~~ \mathtt{tr}(\mv{S}_1)\leq P_1, \mv{S}_1\succeq
0.
\end{align}

Further more, the dual problem (\ref{eq:dual problem}) now only
needs to minimize $g(\mu_1)$ (since $\mu_2=1-\mu_1$) over
$0\leq\mu_1\leq 1$. Then, there are the following three cases in
which $\mu_1^*$ takes different values.
\begin{itemize}
\item $\mu_1^*=0$: In this case, $\mu_2^*=1$. From the Karush-Kuhn-Tucker (KKT) optimality conditions
\cite{Boyd} of (P2), it is known that the constraint (\ref{eq:rate
constraint 1}) is inactive while the constraint (\ref{eq:rate
constraint 2}) is active. This suggests that JD instead of SD is
optimal. Furthermore, from (P3), with $\mu_1=\mu_1^*=0$, it follows
that $\mv{S}_1^*$, denoted by $\mv{S}_1^{\rm JD}$, maximizes the
sum-rate,
$\log\left|\mv{I}+\mv{H}_{11}\mv{S}_1\mv{H}_{11}^H+\mv{H}_{21}\mv{S}_2\mv{H}_{21}^H\right|$,
from which we can show that
\begin{equation}\label{eq:optimal S JD}
\mv{S}_1^{\rm JD}=\mv{S}_1^{\rm SUD}
\end{equation}
where $\mv{S}_1^{\rm SUD}$ is given in (\ref{eq:optimal S SUD}),
i.e., the optimal transmit covariance matrix is the same for both
cases of SUD and JD. However, the optimal $r_1^*$ in this case with
JD, denoted by $r_1^{\rm JD}$, is equal to
\begin{equation}\label{eq:optimal rate JD}
r_1^{\rm JD}=r_1^{\rm SUD}+R_2^{(b)}-r_2
\end{equation}
where $r_1^{\rm SUD}$ is given in (\ref{eq:optimal rate SUD}).
Finally, we need to check the condition under which this case holds.
Since the constraint (\ref{eq:rate constraint 1}) should be
inactive, it follows that
\begin{equation}\label{eq:condition JD}
r_1^{\rm JD}<\log\left|\mv{I}+\mv{H}_{11}\mv{S}_1^{\rm
JD}\mv{H}_{11}^H\right|.
\end{equation}
From (\ref{eq:optimal rate JD}) and (\ref{eq:condition JD}), it can
be shown that the case of interest holds when
\begin{equation}\label{eq:R(1) bar}
r_2>\log\left|\mv{I}+(\mv{I}+\mv{H}_{11}\mv{S}_1^{\rm
JD}\mv{H}_{11}^H)^{-1}\mv{H}_{21}\mv{S}_2\mv{H}_{21}^H
\right|\triangleq \bar{R}_2^{(a)}.
\end{equation}
Note that $\bar{R}_2^{(a)}$ can also be obtained from $R_2^{(a)}$
given in (\ref{eq:R(1)}) by letting $\mv{S}_1=\mv{S}_1^{\rm JD}$.

\item $\mu_1^*=1$: In this case, $\mu_2^*=0$. From the KKT optimality conditions
of (P2), it is known that the constraint (\ref{eq:rate constraint
1}) is active while the constraint (\ref{eq:rate constraint 2}) is
inactive. This suggests that SD instead of JD is optimal.
Furthermore, from (P3), with $\mu_1=\mu_1^*=1$, it follows that
$\mv{S}_1^*$, denoted by $\mv{S}_1^{\rm SD}$, maximizes user 1's own
channel capacity (without the presence of user 2),
$\log\left|\mv{I}+\mv{H}_{11}\mv{S}_1\mv{H}_{11}^H\right|$, from
which we can easily show that \cite{Cover}
\begin{equation}\label{eq:optimal S SD}
\mv{S}_1^{\rm SD}=\mv{V}_1\mv{\Lambda}_1\mv{V}_1^H
\end{equation}
where $\mv{V}_1\in\mathbb{C}^{N_1\times T_1}$ is obtained from the
SVD of the direct-link channel of user 1 expressed as
$\mv{H}_{11}=\mv{U}_1\mv{\Gamma}\mv{V}_1^H$, with
$\mv{U}_1\in\mathbb{C}^{M_1\times T_1}$, $\mv{\Gamma}_1=\mathtt{
diag}(\gamma_1,\ldots,\gamma_{T_1})$, $\gamma_i\geq 0$,
$i=1,\ldots,T_1$, and $\mv{\Lambda}_1=\mathtt{
diag}(q_1,\ldots,q_{T_1})$ with $q_i$'s obtained from the standard
water-filling solution \cite{Cover}:
\begin{equation}
q_i=\left(\nu-\frac{1}{\gamma_i^2}\right)^+, \ \ i=1,\ldots,T_1,
\end{equation}
with $\nu$ being a constant to make $\sum_{i=1}^{T_1}q_i=P_1$. The
optimal $r_1^*$ in this case with SD, denoted by $r_1^{\rm SD}$,
then becomes
\begin{equation}\label{eq:optimal rate SD}
r_1^{\rm SD}=\sum_{i=1}^{T_1}\log(1+\gamma_i^2q_i).
\end{equation}
Similarly like the previous case, we can show that this case holds
when
\begin{equation}\label{eq:R(3)}
r_2<\log\left|\mv{I}+(\mv{I}+\mv{H}_{11}\mv{S}_1^{\rm
SD}\mv{H}_{11}^H)^{-1}\mv{H}_{21}\mv{S}_2\mv{H}_{21}^H
\right|\triangleq \hat{R}_2^{(a)}.
\end{equation}
At last, we have the following lemma.
\begin{lemma}\label{lemma:inequality}
For $\bar{R}_2^{(a)}$ defined in (\ref{eq:R(1) bar}) and
$\hat{R}_2^{(a)}$ defined in (\ref{eq:R(3)}), it holds that
$\bar{R}_2^{(a)}\geq \hat{R}_2^{(a)}$.
\end{lemma}
\begin{proof}
See Appendix \ref{appendix:proof inequality}.
\end{proof}

\item $0<\mu_1^*<1$: In this case, $0<\mu_2^*<1$, and from the KKT optimality conditions
of (P2), it is known that both the constraints (\ref{eq:rate
constraint 1}) and (\ref{eq:rate constraint 2}) are active. This
suggests that
$r_1^*=\log\left|\mv{I}+\mv{H}_{11}\mv{S}_1^*\mv{H}_{11}^H\right|$,
i.e., SD is optimal. However, the optimal solution $\mv{S}_1^*$ of
(P2), or that of (P3) with $\mu_1=\mu_1^*$, denoted by
$\tilde{\mv{S}}_1^{\rm SD}$, in general does not have any
closed-form expression, and thus needs to be obtained by a numerical
search. Since (P3) is convex, the interior-point method \cite{Boyd}
can be used to efficiently obtain its solution for a given $\mu_1$.
Let $\mv{S}_1^{\star}(\mu_1)$ denote the optimal solution of (P3)
for a given $\mu_1$. Then, $\mu_1^*$ can be efficiently found by a
simple bisection search based upon the sub-gradient  \cite{Boyd} of
$g(\mu_1)$, which can be shown from (\ref{eq:Lagrangian new}) (with
$\mu_2=1-\mu_1$) to be
\begin{equation}
\log\left|\mv{I}+\left(\mv{I}+\mv{H}_{11}\mv{S}_1^{\star}(\mu_1)\mv{H}_{11}^H\right)^{-1}\mv{H}_{21}\mv{S}_2\mv{H}_{21}^H
\right|-r_2.
\end{equation}
Once $\mu_1$ converges to $\mu_1^*$, the corresponding
$\mv{S}_1^{\star}(\mu_1)$ becomes the optimal $\tilde{\mv{S}}_1^{\rm
SD}$. The optimal $r_1^*$ in this case with SD, denoted by
$\tilde{r}_1^{\rm SD}$, is then expressed as
\begin{equation}
\tilde{r}_1^{\rm
SD}=\log\left|\mv{I}+\mv{H}_{11}\tilde{\mv{S}}_1^{\rm
SD}\mv{H}_{11}^H\right|.
\end{equation}

Similarly like the previous two cases and using Lemma
\ref{lemma:inequality}, we can show that this case holds when
\begin{equation}
\hat{R}_2^{(a)}\leq r_2 \leq \bar{R}_2^{(a)}.
\end{equation}
\end{itemize}

\subsubsection{Combing $r_2>R_2^{(b)}$ and $r_2\leq R_2^{(b)}$}

To summarize, the following theorem is obtained for the optimal
solution of (P1).
\begin{theorem}
For a given set of $\mv{S}_2$ and $r_2$ of user 2, the optimal
transmit covariance matrix of user 1 and the maximum transmit rate
of user 1 with the proposed OMD are given as follows:
\begin{eqnarray}\label{eq:optimal sol S}
\mv{S}_1^{\rm OMD}=\left\{\begin{array}{ll} \mv{S}_1^{\rm SD}, & 0<r_2<\hat{R}_2^{(a)} \\
\tilde{\mv{S}}_1^{\rm SD}, & \hat{R}_2^{(a)} \leq r_2\leq \bar{R}_2^{(a)} \\
\mv{S}_1^{\rm JD}, & \bar{R}_2^{(a)}< r_2 \leq R_2^{(b)}  \\
\mv{S}_1^{\rm SUD}, & r_2>R_2^{(b)},
\end{array} \right.
\end{eqnarray}
\begin{eqnarray}\label{eq:optimal sol r}
r_1^{\rm OMD}=\left\{\begin{array}{ll} r_1^{\rm SD}, & 0< r_2< \hat{R}_2^{(a)} \\
\tilde{r}_1^{\rm SD}, & \hat{R}_2^{(a)} \leq r_2\leq \bar{R}_2^{(a)} \\
r_1^{\rm JD}, & \bar{R}_2^{(a)}< r_2 \leq R_2^{(b)}  \\
r_1^{\rm SUD}, & r_2>R_2^{(b)}.
\end{array} \right.
\end{eqnarray}
The corresponding optimal decoding methods at user 1's receiver are
(from top to bottom) SD, SD, JD, and SUD, respectively.
\end{theorem}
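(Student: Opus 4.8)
The plan is to read the theorem as a consolidation of the case analysis already carried out, so that the proof reduces to (i) matching each previously-solved subcase to the corresponding line of (\ref{eq:optimal sol S})--(\ref{eq:optimal sol r}), and (ii) verifying that the four ranges of $r_2$ genuinely partition $(0,\infty)$. I would organize everything around the single threshold $R_2^{(b)}$, which separates the regime where multiuser decoding is feasible ($r_2\le R_2^{(b)}$) from the regime where only the SUD can be used ($r_2>R_2^{(b)}$), exactly as in the definition of $r_1(\mv{S}_1)$ in (\ref{eq:rate}).

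First I would dispose of the branch $r_2>R_2^{(b)}$ (weak interference). Here $R_2^{(b)}$ does not depend on $\mv{S}_1$, the only admissible decoder is the SUD, and the noise-whitening plus water-filling argument already produced the optimal covariance $\mv{S}_1^{\rm SUD}$ in (\ref{eq:optimal S SUD}) and rate $r_1^{\rm SUD}$ in (\ref{eq:optimal rate SUD}); this is verbatim the fourth line of the theorem. For the branch $r_2\le R_2^{(b)}$ I would use the convex reformulation (\ref{eq:rate new}) that turns $r_1(\mv{S}_1)$ into a minimum of two concave functions, pass to the equivalent convex problem (P2) as in (\ref{eq:rate MD}), and invoke Lemma \ref{lemma:optimal mu} to set $\mu_2^{*}=1-\mu_1^{*}$, reducing the dual to the one-parameter problem (P3) over $\mu_1\in[0,1]$. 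The three boundary behaviours of $\mu_1^{*}$ were each characterized via the KKT activeness of the constraints (\ref{eq:rate constraint 1})--(\ref{eq:rate constraint 2}): $\mu_1^{*}=1$ gives the closed-form SD pair $(\mv{S}_1^{\rm SD},r_1^{\rm SD})$ valid for $r_2<\hat{R}_2^{(a)}$; $0<\mu_1^{*}<1$ gives the numerical SD pair $(\tilde{\mv{S}}_1^{\rm SD},\tilde{r}_1^{\rm SD})$ valid for $\hat{R}_2^{(a)}\le r_2\le\bar{R}_2^{(a)}$; and $\mu_1^{*}=0$ gives the JD pair $(\mv{S}_1^{\rm JD},r_1^{\rm JD})$ valid for $r_2>\bar{R}_2^{(a)}$. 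Intersecting the last range with $r_2\le R_2^{(b)}$ yields the first three lines of the theorem, with decoding methods SD, SD, and JD.

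The step that needs genuine care is confirming that these four $r_2$-intervals tile $(0,\infty)$ without gaps or overlaps, i.e. that $0\le\hat{R}_2^{(a)}\le\bar{R}_2^{(a)}\le R_2^{(b)}$. The middle inequality is exactly Lemma \ref{lemma:inequality}. The right-hand inequality holds because $\bar{R}_2^{(a)}$ is $R_2^{(a)}$ of (\ref{eq:R(1)}) evaluated at $\mv{S}_1=\mv{S}_1^{\rm JD}$, and $R_2^{(a)}\le R_2^{(b)}$ holds for every $\mv{S}_1$, as already noted in the text after (\ref{eq:R(2)}); the left-hand inequality is immediate because $\hat{R}_2^{(a)}$ is the log-determinant of $\mv{I}$ plus a positive semi-definite matrix and is therefore nonnegative. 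With this ordering in hand, the JD interval $\bar{R}_2^{(a)}<r_2\le R_2^{(b)}$ is well posed and abuts the SUD interval, so the four pieces assemble unambiguously into (\ref{eq:optimal sol S})--(\ref{eq:optimal sol r}).

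Finally, as a consistency check I would verify that the solution is continuous across the three internal thresholds, which confirms the piecewise description is the genuine solution of (P1) rather than an artifact of the case split: at $r_2=\hat{R}_2^{(a)}$ the numerical SD solution collapses as $\mu_1^{*}\to1$ to $\mv{S}_1^{\rm SD}$; at $r_2=\bar{R}_2^{(a)}$ it collapses as $\mu_1^{*}\to0$ to $\mv{S}_1^{\rm JD}$ with $r_1^{\rm JD}=\log\left|\mv{I}+\mv{H}_{11}\mv{S}_1^{\rm JD}\mv{H}_{11}^H\right|$ by the defining equality of $\bar{R}_2^{(a)}$ in (\ref{eq:R(1) bar}); and at $r_2=R_2^{(b)}$ the relation $r_1^{\rm JD}=r_1^{\rm SUD}+R_2^{(b)}-r_2$ reduces to $r_1^{\rm SUD}$ while $\mv{S}_1^{\rm JD}=\mv{S}_1^{\rm SUD}$, matching the SUD branch exactly.
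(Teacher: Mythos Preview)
Your proposal is correct and follows essentially the same approach as the paper: the theorem there is stated explicitly as a summary (``To summarize, the following theorem is obtained\ldots'') of the preceding case analysis, which splits on $r_2\gtrless R_2^{(b)}$, solves the SUD branch by water-filling, and handles the MD branch via (P2), Lemma~\ref{lemma:optimal mu}, and the three $\mu_1^{*}$ subcases with Lemma~\ref{lemma:inequality} ordering the thresholds. Your explicit verification that $0\le\hat R_2^{(a)}\le\bar R_2^{(a)}\le R_2^{(b)}$ and your continuity checks at the internal thresholds are nice additions that the paper leaves implicit, but they do not alter the route.
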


In Fig. \ref{fig:rate region}, we show $r_1^{\rm OMD}$ in
(\ref{eq:optimal sol r}) as a function of $r_2$ for some fixed
$\mv{S}_2$. The rate gain of $r_1^{\rm OMD}$ for OMD over $r_1^{\rm
SUD}$ for SUD is clearly shown when $r_2<R_2^{(b)}$. There are three
pentagon-shape capacity regions shown in the figure, which are
$\mathcal{C}_{\rm MAC}(\mv{S}_1^{\rm JD},\mv{S}_2)$,
$\mathcal{C}_{\rm MAC}(\mv{S}_1^{\rm SD},\mv{S}_2)$, and
$\mathcal{C}_{\rm MAC}(\tilde{\mv{S}}_1^{\rm SD},\mv{S}_2)$,
respectively, where $\mathcal{C}_{\rm MAC}(\mv{S}_1,\mv{S}_2)$
denotes the capacity region of a two-user Gaussian MIMO-MAC with
user 1's and user 2's transmitters transmitting to user 1's
receiver, and $\mv{S}_1$, $\mv{S}_2$ denoting the transmit
covariance matrices of user 1 and user 2, respectively. More
specifically, $\mathcal{C}_{\rm MAC}(\mv{S}_1,\mv{S}_2)$ can be
expressed as \cite{Cover}
\begin{align}\label{eq:MAC capacity region}
\mathcal{C}_{\rm MAC}(\mv{S}_1,\mv{S}_2)\triangleq\left\{(r_1,r_2):
\sum_{i\in\mathcal{J}}r_i\leq
\log\left|\mv{I}+\sum_{i\in\mathcal{J}}\mv{H}_{i1}\mv{S}_i\mv{H}_{i1}^H\right|,
\forall \mathcal{J}\subseteq\{1,2\} \right\}.
\end{align}
Note that in Fig. \ref{fig:rate region}, the sold line consisting of
different rate pairs of $(r_1^{\rm OMD}, r_2)$ constitute the
boundary rate pairs of the aforementioned capacity regions. Also
note that there is a curved part of this rate-pair line in the case
of $\hat{R}_2^{(a)}<r_2<\bar{R}_2^{(a)}$, where $r_1^{\rm OMD}$ is
equal to $\tilde{r}_1^{\rm SD}$ and is achievable by
$\tilde{\mv{S}}_1^{\rm SD}$, which is the solution of problem (P3)
for some given $\mu_1$, $0<\mu_1<1$.

\section{Extension to More Than Two Users} \label{sec:K users}

In this section, we extend the results obtained for the two-user
MIMO system to the general MU-MIMO system with more than two users,
i.e., $K>2$. Due to the symmetry, we consider only user 1's transmit
optimization over $\mv{S}_1$ to maximize transmit rate $r_1$, with
all the other users' transmit rates, $r_2,\ldots,r_K$, and transmit
covariance matrices, $\mv{S}_2,\ldots,\mv{S}_K$, being fixed.

To apply OMD at user 1's receiver, we need to first identify the
group of users whose signals are (jointly or successively) decodable
at user 1's receiver without the presence of user 1's own received
signal. We thus have the following definitions:
\begin{definition}\label{def:1}
A set $\mathcal{U}_1$, $\mathcal{U}_1\subseteq\{2,\ldots,K\}$, is
called a {\it decodable user set} for user 1, if the received
signals at user 1's receiver due to the users in $\mathcal{U}_1$ are
decodable without the presence of user 1's own received signal, by
treating the received signals from the other users in
$\overline{\mathcal{U}_1}$ as colored Gaussian noise, where
$\overline{\mathcal{U}_1}$ denotes the complementary set of
$\mathcal{U}_1$, i.e.,
$\mathcal{U}_1\bigcap\overline{\mathcal{U}_1}=\varnothing$ and
$\mathcal{U}_1\bigcup\overline{\mathcal{U}_1}=\{2,\ldots,K\}$. More
specifically, the transmit rates of users in $\mathcal{U}_1$ must
satisfy \cite{Cover}
\begin{align}
\sum_{i\in\mathcal{J}}r_i\leq\log\left|\mv{I}+\left(\mv{I}+\sum_{k\in\overline{\mathcal{U}_1}}\mv{H}_{k1}\mv{S}_k\mv{H}_{k1}^H\right)^{-1}
\sum_{i\in\mathcal{J}}\mv{H}_{i1}\mv{S}_i\mv{H}_{i1}^H\right|,
\forall \mathcal{J}\subseteq \mathcal{U}_1.
\end{align}
\end{definition}
\vspace{0.2in}
\begin{definition}
A set $\mathcal{U}_1^*\subseteq\{2,\ldots,K\}$ is called an {\it
optimal} decodable user set for user 1, if $\mathcal{U}_1^*$ is a
decodable user set for user 1, and among all possible decodable user
sets for user 1, $\mathcal{U}_1^*$  has the largest size.
\end{definition}

Next, we have the following important proposition:
\begin{proposition}\label{proposition}
The set $\mathcal{U}_1^*$ is unique. Furthermore, for any decodable
user set for user 1, $\mathcal{U}_1$, it holds that
$\mathcal{U}_1\subseteq\mathcal{U}_1^*$.
\end{proposition}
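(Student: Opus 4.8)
The plan is to reduce both claims to a single structural fact: the family of decodable user sets for user 1 is closed under union. Granting that, let $\mathcal{U}_1^\dagger$ be the union of \emph{all} decodable user sets; this is a finite union, since there are finitely many subsets of $\{2,\ldots,K\}$ and the empty set is vacuously decodable. By the closure property, $\mathcal{U}_1^\dagger$ is itself decodable, and it contains every decodable set by construction, so it has the largest size among all decodable sets and hence equals $\mathcal{U}_1^*$. Uniqueness then follows because any other optimal (largest-size) decodable set $\mathcal{U}_1'$ satisfies $\mathcal{U}_1'\subseteq\mathcal{U}_1^\dagger$ while $|\mathcal{U}_1'|=|\mathcal{U}_1^\dagger|$, forcing $\mathcal{U}_1'=\mathcal{U}_1^\dagger$; and the containment assertion $\mathcal{U}_1\subseteq\mathcal{U}_1^*$ is immediate from the definition of the union. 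Thus everything hinges on closure under union.

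To prove closure, I would first recast the decodability condition using the set function $f(\mathcal{S})=\log\left|\mv{I}+\mv{Q}(\mathcal{S})\right|$, where $\mv{Q}(\mathcal{S})=\sum_{i\in\mathcal{S}}\mv{H}_{i1}\mv{S}_i\mv{H}_{i1}^H$. Since $\mathcal{J}$ and $\overline{\mathcal{U}_1}$ are disjoint, the identity $\log\left|\mv{I}+\mv{A}^{-1}\mv{B}\right|=\log\left|\mv{A}+\mv{B}\right|-\log\left|\mv{A}\right|$ turns the inequality in Definition \ref{def:1} into the equivalent form that $\mathcal{U}_1$ is decodable iff $\sum_{i\in\mathcal{J}}r_i\leq f(\mathcal{J}\cup\overline{\mathcal{U}_1})-f(\overline{\mathcal{U}_1})$ for every $\mathcal{J}\subseteq\mathcal{U}_1$. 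The crucial property of $f$ is that it is monotone, normalized ($f(\varnothing)=0$), and submodular. Monotonicity and normalization are immediate from $\mv{Q}(\mathcal{S})\succeq 0$ and $\mv{Q}(\varnothing)=\mv{0}$; submodularity follows because the marginal increment of adding user $e$, namely $\log\left|\mv{I}+(\mv{I}+\mv{Q}(\mathcal{S}))^{-1}\mv{H}_{e1}\mv{S}_e\mv{H}_{e1}^H\right|$, is non-increasing in $\mathcal{S}$: enlarging $\mathcal{S}$ enlarges $\mv{Q}(\mathcal{S})$ in the positive semi-definite order, which shrinks $(\mv{I}+\mv{Q}(\mathcal{S}))^{-1}$ and hence the increment.

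Now take two decodable sets $\mathcal{A}$ and $\mathcal{B}$ and an arbitrary $\mathcal{J}\subseteq\mathcal{A}\cup\mathcal{B}$; the goal is $\sum_{i\in\mathcal{J}}r_i\leq f(\mathcal{J}\cup\overline{\mathcal{A}\cup\mathcal{B}})-f(\overline{\mathcal{A}\cup\mathcal{B}})$. I would split $\mathcal{J}=\mathcal{J}_A\cup\mathcal{J}_B$ with $\mathcal{J}_A=\mathcal{J}\cap\mathcal{A}\subseteq\mathcal{A}$ and $\mathcal{J}_B=\mathcal{J}\setminus\mathcal{A}\subseteq\mathcal{B}$, apply the reformulated decodability of $\mathcal{A}$ to $\mathcal{J}_A$ and of $\mathcal{B}$ to $\mathcal{J}_B$, and add the two resulting bounds. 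Passing to the contracted function $g(\cdot)=f(\cdot\cup\overline{\mathcal{A}\cup\mathcal{B}})-f(\overline{\mathcal{A}\cup\mathcal{B}})$ (still submodular, monotone, normalized) and writing everything in terms of the disjoint blocks $\mathcal{A}\cap\mathcal{B}$, $\mathcal{A}\setminus\mathcal{B}$, $\mathcal{B}\setminus\mathcal{A}$, the sum of the two bounds collapses to the desired right-hand side after two applications of submodularity: a diminishing-returns step, where the marginal of $\mathcal{J}_A$ over the smaller set $\mathcal{J}_B$ dominates its marginal over the larger set $\mathcal{B}\setminus\mathcal{A}$, followed by subadditivity on the two disjoint blocks $\mathcal{J}_B$ and $\mathcal{A}\setminus\mathcal{B}$.

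The main obstacle is exactly this last combinatorial step: choosing the right decomposition of $\mathcal{J}$ and the right pair of submodularity/subadditivity inequalities so that the two conditional bounds telescope into the single bound for $\mathcal{A}\cup\mathcal{B}$. Everything preceding it (the determinant rewriting and the submodularity of $f$) is routine and standard, and everything following it (deducing uniqueness and maximality of $\mathcal{U}_1^*$) is a one-line consequence of closure under union.
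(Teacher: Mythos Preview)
Your proposal is correct and rests on the same mathematical core as the paper's proof: both arguments hinge on the submodularity of $f(\mathcal{S})=\log\left|\mv{I}+\sum_{i\in\mathcal{S}}\mv{H}_{i1}\mv{S}_i\mv{H}_{i1}^H\right|$ to show that the union of two decodable sets is again decodable. The paper carries this out as two separate contradiction arguments with a concrete block decomposition $\mathcal{A}_1=\mathcal{D}\cup\mathcal{C}$, $\mathcal{B}_1=\mathcal{E}\cup\mathcal{C}$, applying diminishing returns three times (twice implicitly when relaxing the noise sets, once explicitly) to arrive at the decodability of $\mathcal{D}\cup\mathcal{C}\cup\mathcal{E}$; you instead isolate ``closure under union'' as a single lemma, name submodularity explicitly, and then read off both uniqueness and the containment $\mathcal{U}_1\subseteq\mathcal{U}_1^*$ as immediate corollaries of the existence of a maximum element. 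Your split $\mathcal{J}_A=\mathcal{J}\cap\mathcal{A}$, $\mathcal{J}_B=\mathcal{J}\setminus\mathcal{A}$ and the two submodularity steps (diminishing returns for $\mathcal{J}_A$ over $\mathcal{J}_B\subseteq\mathcal{B}\setminus\mathcal{A}$, then subadditivity for the disjoint pair $\mathcal{J}_B$ and $\mathcal{A}\setminus\mathcal{B}$) do telescope correctly to the desired bound, so the ``main obstacle'' you flag is not actually an obstacle. The net effect is the same theorem via the same mechanism; your version is just more structurally transparent and avoids the redundancy of running the union argument twice.
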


\begin{proof}
See Appendix \ref{appendix:proof proposition}.
\end{proof}

For conciseness, we show the algorithm to find the unique set for
user 1, $\mathcal{U}_1^*$, in Appendix \ref{appendix:algorithm}.

From Proposition \ref{proposition}, it follows that the optimal
decoding strategy for user 1's receiver is applying OMD to the users
in the set $\mathcal{U}_1^*$ (it may be possible that
$\mathcal{U}_1^*=\varnothing$), while taking the users in the set
$\overline{\mathcal{U}_1^*}$ as additional colored Gaussian noise.
For an arbitrary set $\mathcal{V}$, let $|\mathcal{V}|$ denote the
size of $\mathcal{V}$. Note that to make the OMD feasible, the rate
of user 1, $r_1$, and the rates of users in $\mathcal{U}_1^*$ must
be jointly in the capacity region of the corresponding
$(|\mathcal{U}_1^*|+1)$-user Gaussian MIMO-MAC for a given set of
user transmit covariance matrices and the receiver noise covariance
matrix,
$\mv{\Phi}=\mv{I}+\sum_{k\in\overline{\mathcal{U}_1^*}}\mv{H}_{k1}\mv{S}_k\mv{H}_{k1}^H$,
which, similar to (\ref{eq:MAC capacity region}), can be defined as
\begin{align}\label{eq:MAC capacity region new}
\mathcal{C}_{\rm
MAC}(\mathcal{U}_1^*)\triangleq&\bigg\{(r_1,\{r_i\}_{i\in
\mathcal{U}_1^*}):
\sum_{i\in\mathcal{J}}r_i\leq\log\left|\mv{I}+\mv{\Phi}^{-1}
\sum_{i\in\mathcal{J}}\mv{H}_{i1}\mv{S}_i\mv{H}_{i1}^H\right|,
\forall \mathcal{J}\subseteq \{1\}\bigcup\mathcal{U}_1^* \bigg\}.
\end{align}
Note that in (\ref{eq:MAC capacity region new}), the rate
inequalities involving subsets $\mathcal{J}$'s containing users
solely from $\mathcal{U}_1^*$ all hold due to the definition of
$\mathcal{U}_1^*$. Therefore, in order to find the optimal
$\mv{S}_1$ for user 1 to maximize $r_1$, with fixed $r_i$'s and
$\mv{S}_i$'s, $i=2,\ldots,K$, it is sufficient to consider the
following optimization problem:
\begin{align}
\mbox{(P4)}~~\mathop{\mathtt{max}}_{\mv{S}_1, r_1} & ~~~ r_1
\nonumber \\
\mathtt{s.t.} & ~~~
r_1+\sum_{i\in\mathcal{J}}r_i\leq\log\left|\mv{I}+\mv{\Phi}^{-1}
\left(\mv{H}_{11}\mv{S}_1\mv{H}_{11}^H+\sum_{i\in\mathcal{J}}\mv{H}_{i1}\mv{S}_i\mv{H}_{i1}^H\right)\right|,
\forall \mathcal{J}\subseteq \mathcal{U}_1^* \label{eq:rate constraints general case}\\
&~~~r_1\geq 0, \mathtt{tr}(\mv{S}_1)\leq P_1, \mv{S}_1\succeq 0
\end{align}

Problem (P4) is convex in terms of $r_1$ and $\mv{S}_1$ since its
constraints specify a convex set of $(r_1, \mv{S}_1)$. Similarly
like for problem (P2), we introduce a set of non-negative dual
variables, $\mu_n$'s, $n=1,\ldots, 2^{|\mathcal{U}_1^*|}$, each
associated with one corresponding constraint in (\ref{eq:rate
constraints general case}) for a particular subsect $\mathcal{J}$
(including $\mathcal{J}=\varnothing$) denoted by $\mathcal{J}_n$,
and obtain an equivalent problem for the optimization over
$\mv{S}_1$ for a given set of fixed $\mu_n$'s, which is expressed as
\begin{align}
\mbox{(P5)}~~\mathop{\mathtt{max}}_{\mv{S}_1} & ~~~ \sum_{n=1}^{
2^{|\mathcal{U}_1^*|}}\mu_n\log\left|\mv{I}+\mv{\Phi}^{-1}
\left(\mv{H}_{11}\mv{S}_1\mv{H}_{11}^H+\sum_{i\in\mathcal{J}_n}\mv{H}_{i1}\mv{S}_i\mv{H}_{i1}^H\right)\right|
\nonumber \\
\mathtt{s.t.} & ~~~ \mathtt{tr}(\mv{S}_1)\leq P_1, \mv{S}_1\succeq
0.
\end{align}

It can be shown that problem (P5) is convex, and thus it can be
solved via standard convex optimization techniques, e.g., the
interior point method \cite{Boyd}, while in general, no closed-form
solution for (P5) is available, similar to the previous two-user
case in Section \ref{sec:2 users}. Let the optimal solution of (P5)
be denoted by $\mv{S}_1^{\star}(\{\mu_n\})$. Then, $\mu_n$'s can be
updated towards the optimal dual solutions of (P4) via the
well-known ellipsoid method \cite{Boyd} subject to an additional
constraint, $\sum_n\mu_n=1$ (similar to Lemma \ref{lemma:optimal mu}
in the two-user case). Let the optimal solutions of $\mu_n$'s be
denoted by $\mu_n^*$'s. The optimal solution of $\mv{S}_1$ for (P4)
with OMD is then obtained as $\mv{S}_1^{\rm
OMD}=\mv{S}_1^{\star}(\{\mu_n^*\})$, and the corresponding maximum
achievable rate of user 1, $r_1^{\rm OMD}$, can be obtained from any
active constraint in (\ref{eq:rate constraints general case}) with
equality. The optimal decoding orders/decoding methods for the users
in $\mathcal{U}_1^*$ prior to decoding user 1's message can be
obtained according to the optimal non-zero dual solutions,
$\mu_n^*$'s, or equivalently, the corresponding active constraints
in (\ref{eq:rate constraints general case}) with equality, via
applying the property of polymatroid structure of $\mathcal{C}_{\rm
MAC}(\mathcal{U}_1^*)$ given in (\ref{eq:MAC capacity region new})
\cite{Tse98}.

\section{Simulation Results} \label{sec:numerical results}

In this section, the performance of the proposed OMD is evaluated in
comparison with the conventional SUD in a decentralized MU-MIMO
system with $K=2$ users, where the two users adopt an IWF-like
algorithm to successively in turn optimize their transmit covariance
matrices for individual rate maximization by deploying OMD or SUD at
their receivers. For the purpose of exposition, all the channels
involved in the system, including user's direct-link and cross-link
channels, are assumed to have independent Rayleigh-fading
distributions, i.e., each element of the channel matrix is
independent and identically distributed as zero-mean CSCG random
variable. Furthermore, each element of the two users' direct-link
channels is assumed to have the variance  $\rho_{11}$ and
$\rho_{22}$, for user 1 and 2, respectively; and each element of the
two cross-link channels has the variance, $\rho_{12}$ for the
channel from user 1 to user 2 and $\rho_{21}$ for the channel from
user 2 to user 1, respectively. In total, 5000 independent channel
realizations are simulated over which each user's achievable average
rate is computed. For each channel realization, the two users
iteratively update their transmit covariance matrices until their
rates both get converged. It is assumed that $M_k=N_k=2, k=1,2$.

In Fig. \ref{fig:sum rate}, the achievable average sum-rate of the
two users is shown for a symmetric system and channel setup, where
$P_1=P_2=100$, $\rho_{11}=\rho_{22}=1$, and
$\rho_{12}=\rho_{21}=\rho$. The user sum-rate is plotted against
$\rho$ to investigate the effect of the interference between the two
users on their achievable sum-rate. It is observed that the sum-rate
with the proposed OMD improves over that with the conventional SUD
for all the values of $\rho$, while the rate gains become more
substantial in the case of large values of $\rho$, i.e., the
``strong'' interference case. With SUD, it is observed that the
sum-rate first decreases with increasing of $\rho$ (as a result of
interference whitening), and then starts to increase with $\rho$ (as
a result of interference avoidance), and finally gets converged for
large values of $\rho$ (due to the fact that zero-forcing (ZF)
-based receive beamforming to completely null the co-channel
interference becomes optimal at the high signal-to-noise ratio (SNR)
region). However, the sum-rate with the proposed OMD is observed to
increase consistently with $\rho$, due to the fact that when the
co-channel interference becomes stronger at the receiver, the OMD
more easily decodes the interference.

Next, we consider a special scenario of the general system model
studied in this paper. In this case, a ``cognitive radio (CR)'' type
of newly emerging wireless system is considered, where user 1 is the
so-called primary (non-cognitive) user (PU) who is the legitimate
user operating in the frequency band of interest, while user 2 is
the secondary (cognitive) user (SU) that transmits simultaneously
with the PU over the same spectrum under the constraint that its
transmission will not cause the PU's transmission performance to an
unacceptable level \cite{Zhang08}. The PU is non-cognitive since it
is oblivious to the existence of the SU and applies the conventional
SUD at the receiver by treating the interference from the SU as
additional noise. While for the SU, it is cognitive in the sense
that it is aware of the PU and thus transmits with a much lower
average power than that of the PU in order to protect the PU; thus,
for this example it is assumed that $P_1=10P$ and $P_2=P$, where $P$
is a given constant. In addition, since the SU is cognitive, it may
choose to use the more advanced OMD at the receiver to cope with the
interference from the PU. Two cases are thus studied for this
example: Case (I) both user 1 and user 2 employ SUD; and Case (II)
user 1 employs SUD while user 2 employs OMD. It is assumed that the
SU's link distance is much shorter than that of the PU link, and
furthermore the SU transmitter and receiver are both in the vicinity
of the PU transmitter while they are both sufficiently far away from
the PU receiver. Thus, for this example we assume that $\rho_{11}=1,
\rho_{22}=10, \rho_{12}=10$, and $\rho_{21}=1$.

In Fig. \ref{fig:CR rate}, the achievable user individual rates are
shown for different values of $P$ in both Cases I and II. It is
observed that the achievable rate of user 2 (the SU) improves
significantly in Case II over Case I, thanks to the use of OMD
instead of SUD. This rate gain is substantial because the SU
receiver is close to the PU transmitter and thus $\rho_{12}$ is
large, i.e., the cross-link channel from PU to SU is a ``strong''
interference channel, for which the OMD is crucial for the SU to
mitigate the PU's interference. However, it is also observed that
the achievable rate of user 1 (the PU) drops slightly in Case II as
compared with Case I. This is because that in Case II with OMD, the
SU's transmitted signal has a more spatially spread-out spectrum
than that in Case I with SUD, and so does the received SU's
interference at the PU receiver. Nevertheless, due to the small
value of $\rho_{21}$ or the weak cross-link channel from SU to PU,
the capacity loss of the PU is not significant, which justifies the
operation principle of the SU, i.e., the PU transmission should be
sufficiently protected.

\section{Conclusion} \label{sec:conclusion}

This paper studied a new decoding method, namely opportunistic
multiuser detection (OMD), for the decentralized MU-MIMO system
where each user iteratively optimizes transmit covariance matrix for
individual rate maximization. In comparison with the conventional
single-user detection (SUD), the proposed OMD still allows a fully
decentralized processing of each user in the system, while it
improves the user's interference mitigation capability at the
receiver, and leads to more optimum spatial spectrum sharing among
the users. Simulation results showed that substantial system
throughput gains could be achieved by the proposed OMD over the
conventional SUD, for certain application scenarios.

\appendices

\section{Proof of Lemma \ref{lemma:optimal mu}}\label{appendix:proof optimal mu}

We will prove Lemma \ref{lemma:optimal mu} by contradiction. First,
suppose that $\mu_1^*+\mu_2^*<1$. Then, in the maximization problem
of (\ref{eq:Lagrange dual}), from the expression of
$\mathcal{L}(r_1,\mv{S}_1,\mu_1,\mu_2)$ in (\ref{eq:Lagrangian
new}), it follows that the optimal $r_1$ that maximizes the
Lagrangian is $r_1^*=+\infty$, which contradicts the fact that $r_1$
in (P2) is upper-bounded by finite rate values in the constraints
(\ref{eq:rate constraint 1}) and (\ref{eq:rate constraint 2}).
Second, suppose that $\mu_1^*+\mu_2^*>1$. Similarly like the
previous case, it can shown that $r_1^*=0$. However, this can not be
true since we can easily find a feasible solution set for
$(r_1,\mv{S}_1)$ in (P2) such that $r_1>0$. By combining the above
two cases, it follows that $\mu_1^*+\mu_2^*=1$.

\section{Proof of Lemma \ref{lemma:inequality}}\label{appendix:proof inequality}

We rewrite $\bar{R}_2^{(a)}$ in (\ref{eq:R(1) bar}) and
$\hat{R}_2^{(a)}$ in (\ref{eq:R(3)}) as
\begin{align}
\bar{R}_2^{(a)}&=\log\left|\mv{I}+\mv{H}_{11}\mv{S}_1^{\rm
JD}\mv{H}_{11}^H+\mv{H}_{21}\mv{S}_2\mv{H}_{21}^H
\right|-\log\left|\mv{I}+\mv{H}_{11}\mv{S}_1^{\rm
JD}\mv{H}_{11}^H\right| \label{eq:inequality 1}\\
\hat{R}_2^{(a)}&=\log\left|\mv{I}+\mv{H}_{11}\mv{S}_1^{\rm
SD}\mv{H}_{11}^H+\mv{H}_{21}\mv{S}_2\mv{H}_{21}^H
\right|-\log\left|\mv{I}+\mv{H}_{11}\mv{S}_1^{\rm
SD}\mv{H}_{11}^H\right|. \label{eq:inequality 2}
\end{align}
Since $\mv{S}_1^{\rm JD}$ and $\mv{S}_1^{\rm SD}$ are optimal for
the sum-capacity (in an equivalent two-user MIMO-MAC) and user'1
channel capacity (without the presence of user 2), respectively, we
have
\begin{align}
\log\left|\mv{I}+\mv{H}_{11}\mv{S}_1^{\rm
JD}\mv{H}_{11}^H+\mv{H}_{21}\mv{S}_2\mv{H}_{21}^H \right|&\geq
\log\left|\mv{I}+\mv{H}_{11}\mv{S}_1^{\rm
SD}\mv{H}_{11}^H+\mv{H}_{21}\mv{S}_2\mv{H}_{21}^H \right| \\
\log\left|\mv{I}+\mv{H}_{11}\mv{S}_1^{\rm
JD}\mv{H}_{11}^H\right|&\leq
\log\left|\mv{I}+\mv{H}_{11}\mv{S}_1^{\rm SD}\mv{H}_{11}^H\right|.
\end{align}
Combining the above two inequalities with (\ref{eq:inequality 1})
and (\ref{eq:inequality 2}), it thus follows that
$\bar{R}_2^{(a)}\geq \hat{R}_2^{(a)}$.

\section{Proof of Proposition \ref{proposition}}\label{appendix:proof proposition}

We first prove the former part of Proposition \ref{proposition},
i.e., the set $\mathcal{U}_1^*$ is unique, by contradiction. Suppose
that there exist two optimal decodable user sets for user 1 with the
same size, denoted by $\mathcal{A}_1$ and $\mathcal{B}_1$. Without
loss of generality, we let $\mathcal{A}_1=\{\mathcal{D},
\mathcal{C}\}$ and $\mathcal{B}_1=\{\mathcal{E}, \mathcal{C}\}$,
where $\mathcal{C}$, $\mathcal{D}$ and $\mathcal{E}$ are subsets
consisting of completely different user indexes. Then, we can
express $\overline{\mathcal{A}}_1=\{\mathcal{E},\mathcal{F}\}$ and
$\overline{\mathcal{B}}_1=\{\mathcal{D},\mathcal{F}\}$, where
$\mathcal{F}=\overline{\mathcal{A}_1\bigcup\mathcal{B}_1}$. Then,
for users in the set $\mathcal{A}_1$, their transmit rates must
satisfy \cite{Cover}
\begin{align}\label{eq:rate J}
\sum_{i\in\mathcal{J}\bigcup\mathcal{K}}r_i\leq\log\left|\mv{I}+\left(\mv{I}+\sum_{k\in\overline{\mathcal{A}_1}}\mv{H}_{k1}\mv{S}_k\mv{H}_{k1}^H\right)^{-1}
\sum_{i\in\mathcal{J}\bigcup\mathcal{K}}\mv{H}_{i1}\mv{S}_i\mv{H}_{i1}^H\right|,
\forall \mathcal{J}\subseteq \mathcal{D}, \mathcal{K}\subseteq
\mathcal{C}.
\end{align}
Similarly, for users in the subset $\mathcal{E}$ of $\mathcal{B}_1$,
their transmit rates must satisfy
\begin{align}\label{eq:rate I}
\sum_{i\in\mathcal{I}}r_i\leq\log\left|\mv{I}+\left(\mv{I}+\sum_{k\in\overline{\mathcal{B}_1}}\mv{H}_{k1}\mv{S}_k\mv{H}_{k1}^H\right)^{-1}
\sum_{i\in\mathcal{I}}\mv{H}_{i1}\mv{S}_i\mv{H}_{i1}^H\right|,
\forall \mathcal{I}\subseteq \mathcal{E}.
\end{align}
Let $\mathcal{J}'$ be an orthogonal set of $\mathcal{J}$, where
$\mathcal{J}'\bigcup\mathcal{J}=\mathcal{D}$. Similarly,
$\mathcal{I}'$ is defined for $\mathcal{I}$, where
$\mathcal{I}'\bigcup\mathcal{I}=\mathcal{E}$. (\ref{eq:rate J}) and
(\ref{eq:rate I}) can thus be further shown as follows:
\begin{align}\label{eq:rate J new}
\sum_{i\in\mathcal{J}\bigcup\mathcal{K}}r_i\leq\log\left|\mv{I}+\left(\mv{I}+\sum_{k\in\mathcal{I}\bigcup\mathcal{F}}\mv{H}_{k1}\mv{S}_k\mv{H}_{k1}^H\right)^{-1}
\sum_{i\in\mathcal{J}\bigcup\mathcal{K}}\mv{H}_{i1}\mv{S}_i\mv{H}_{i1}^H\right|
\end{align}
\begin{align}\label{eq:rate I new}
\sum_{i\in\mathcal{I}}r_i\leq\log\left|\mv{I}+\left(\mv{I}+\sum_{k\in\mathcal{J}\bigcup\mathcal{F}}\mv{H}_{k1}\mv{S}_k\mv{H}_{k1}^H\right)^{-1}
\sum_{i\in\mathcal{I}}\mv{H}_{i1}\mv{S}_i\mv{H}_{i1}^H\right|.
\end{align}
From (\ref{eq:rate J new}) and (\ref{eq:rate I new}), we obtain
\begin{align}\label{eq:rate I J}
\sum_{i\in\mathcal{J}\bigcup\mathcal{K}\bigcup\mathcal{I}}r_i\leq&\log\left|\mv{I}+
\sum_{i\in\mathcal{J}\bigcup\mathcal{K}\bigcup\mathcal{I}\bigcup\mathcal{F}}\mv{H}_{i1}\mv{S}_i\mv{H}_{i1}^H\right|+\log\left|\mv{I}+
\sum_{i\in\mathcal{I}\bigcup\mathcal{J}\bigcup\mathcal{F}}\mv{H}_{i1}\mv{S}_i\mv{H}_{i1}^H\right|
\nonumber \\ &-\log\left|\mv{I}+
\sum_{i\in\mathcal{I}\bigcup\mathcal{F}}\mv{H}_{i1}\mv{S}_i\mv{H}_{i1}^H\right|-\log\left|\mv{I}+
\sum_{i\in\mathcal{J}\bigcup\mathcal{F}}\mv{H}_{i1}\mv{S}_i\mv{H}_{i1}^H\right|
\end{align}
Since
\begin{align}\label{eq:rate I J new}
\log\left|\mv{I}+
\sum_{i\in\mathcal{I}\bigcup\mathcal{J}\bigcup\mathcal{F}}\mv{H}_{i1}\mv{S}_i\mv{H}_{i1}^H\right|
-\log\left|\mv{I}+
\sum_{i\in\mathcal{I}\bigcup\mathcal{F}}\mv{H}_{i1}\mv{S}_i\mv{H}_{i1}^H\right|\nonumber
\\ \leq\log\left|\mv{I}+
\sum_{i\in\mathcal{J}\bigcup\mathcal{F}}\mv{H}_{i1}\mv{S}_i\mv{H}_{i1}^H\right|-\log\left|\mv{I}+
\sum_{i\in\mathcal{F}}\mv{H}_{i1}\mv{S}_i\mv{H}_{i1}^H\right|
\end{align}
From (\ref{eq:rate I J}) and (\ref{eq:rate I J new}), it follows
that
\begin{align}\label{eq:rate I J final}
\sum_{i\in\mathcal{J}\bigcup\mathcal{K}\bigcup\mathcal{I}}r_i\leq&\log\left|\mv{I}+
\sum_{i\in\mathcal{J}\bigcup\mathcal{K}\bigcup\mathcal{I}\bigcup\mathcal{F}}\mv{H}_{i1}\mv{S}_i\mv{H}_{i1}^H\right|-\log\left|\mv{I}+
\sum_{i\in\mathcal{F}}\mv{H}_{i1}\mv{S}_i\mv{H}_{i1}^H\right|
\nonumber \\ =&
\log\left|\mv{I}+\left(\mv{I}+\sum_{k\in\mathcal{F}}\mv{H}_{k1}\mv{S}_k\mv{H}_{k1}^H\right)^{-1}
\sum_{i\in\mathcal{J}\bigcup\mathcal{K}\bigcup\mathcal{I}}\mv{H}_{i1}\mv{S}_i\mv{H}_{i1}^H\right|.
\end{align}
Thus, the set $\mathcal{J}\bigcup\mathcal{K}\bigcup\mathcal{I}$ is a
decodable user set for user 1 for any $\mathcal{J}\subseteq
\mathcal{D}, \mathcal{K}\subseteq \mathcal{C}$, and
$\mathcal{I}\subseteq \mathcal{E}$, and so is the set
$\mathcal{G}_1=\mathcal{D}\bigcup\mathcal{C}\bigcup\mathcal{E}$.
Since the size of $\mathcal{G}_1$ is larger than that of
$\mathcal{A}_1$ or $\mathcal{B}_1$, this contradicts the assumption
that $\mathcal{A}_1$ and $\mathcal{B}_1$ are optimal decodable user
sets for user 1. The proof of the former part of Proposition
\ref{proposition} thus follows.

Next, we prove the latter part of Proposition \ref{proposition},
i.e., any decodable user set for user 1, $\mathcal{U}_1$, must be a
subset of $\mathcal{U}_1^*$. The proof is also obtained via
contradiction. Suppose that there is a set $\mathcal{U}_1$ that is
not a subset of $\mathcal{U}_1^*$. Without loss of generality, we
can express $\mathcal{U}_1=\{\mathcal{D}, \mathcal{C}\}$ and
$\mathcal{U}_1^*=\{\mathcal{E}, \mathcal{C}\}$, where $\mathcal{C}$,
$\mathcal{D}$ and $\mathcal{E}$ are orthogonal subsets. Based on the
proof for the former part of Proposition \ref{proposition}, we know
that the set $\mathcal{D}\bigcup\mathcal{C}\bigcup\mathcal{E}$ is
also a decodable user set for user 1, and apparently, it has a
larger size than $\mathcal{U}_1^*$, which contradicts the fact that
$\mathcal{U}_1^*$ is the optimal decodable user set for user 1. The
proof of the latter part of Proposition \ref{proposition} thus
follows.

\begin{table}
\centering
\begin{tabular}{|l|}
\hline \hspace*{0.0cm} Initialize $\mathcal{V}=\{2,\ldots,K\}$,
$\overline{\mathcal{V}}=\varnothing$. \\
\hspace*{0.0cm} While $|\mathcal{V}|>0$ do ~~~~~~~~~~~~(1)\\
\hspace*{0.5cm} Initialize $n=1$ \\
\hspace*{0.5cm} While $n\leq 2^{|\mathcal{V}|}-1$ do\\
\hspace*{1.0cm} If $\sum_{i\in\mathcal{V}_n}r_i\leq C(\mathcal{V}_n)$ \\
\hspace*{1.5cm} Set $n\leftarrow n+1$ \\
\hspace*{1.0cm} Else \\
\hspace*{1.5cm} Set $\mathcal{V}\leftarrow \mathcal{V}-\mathcal{V}_{n}$ \\
\hspace*{1.5cm} Set $\overline{\mathcal{V}}\leftarrow \overline{\mathcal{V}}\bigcup\mathcal{V}_{n}$\\
\hspace*{1.5cm} Go to (1) \\
\hspace*{1.0cm} End If \\
\hspace*{0.5cm} End While\\
\hspace*{0.5cm} Go to (2)\\
\hspace*{0.0cm} End While\\
\hspace*{0.0cm} Set $\mathcal{U}_1^*=\mathcal{V}$. ~~~~~~~~~~~~~~~~~ (2) \\
\hline
\end{tabular}
\caption{The algorithm to find $\mathcal{U}_1^*$.} \label{table}
\end{table}

\section{Algorithm to Find $\mathcal{U}_1^*$}\label{appendix:algorithm}

In this appendix, we present an algorithm to find the optimal
decodable user set for user 1, $\mathcal{U}_1^*$. First, some
notations are given as follows for the convenience of presentation.
Let $\mathcal{V}_n$ denote a subset of an arbitrary set
$\mathcal{V}$, $n=1,\ldots,2^{|\mathcal{V}|}-1$. Note that here we
have excluded the case that $\mathcal{V}_n=\varnothing$ for the ease
of presentation. The operation $\mathcal{V}-\mathcal{V}_n$ then
stands for removing the subset $\mathcal{V}_n$ from $\mathcal{V}$.

For a given user set, $\mathcal{V}\subseteq\{2,\ldots,K\}$, we know
from Definition \ref{def:1} that $\mathcal{V}$ is a decodable user
set for user 1 if and only if for any subset of $\mathcal{V}$,
$\mathcal{V}_n$, it satisfies that
\begin{align}
\sum_{i\in\mathcal{V}_n}r_i\leq\log\left|\mv{I}+\left(\mv{I}+\sum_{k\in\overline{\mathcal{V}}}\mv{H}_{k1}\mv{S}_k\mv{H}_{k1}^H\right)^{-1}
\sum_{i\in\mathcal{V}_n}\mv{H}_{i1}\mv{S}_i\mv{H}_{i1}^H\right|\triangleq
C(\mathcal{V}_n).
\end{align}
However, if there exists a subset $\mathcal{V}_n$ such that
$\sum_{i\in\mathcal{V}_n}r_i>C(\mathcal{V}_n)$, it follows that
$\mathcal{V}$ should not be a decodable user set for user 1. From
the above property, we are able to design an iterative algorithm to
find $\mathcal{U}_1^*$, which is explained as follows. Initially, we
let $\mathcal{V}=\{2,\ldots,K\}$. Thus,
$\overline{\mathcal{V}}=\varnothing$. Then, we will sequentially
check for all the subsets of $\mathcal{V}$ whether
$\sum_{i\in\mathcal{V}_n}r_i\leq C(\mathcal{V}_n), \forall n$. If
this is the case, then we declare that
$\mathcal{U}_1^*=\mathcal{V}$. However, if we find any $n'$ such
that $\sum_{i\in\mathcal{V}_{n'}}r_i>C(\mathcal{V}_{n'})$, then we
conclude that $\mathcal{V}$ should not be $\mathcal{U}_1^*$ and
furthermore $\mathcal{U}_1^*\subseteq \mathcal{V}-\mathcal{V}_{n'}$.
In this case, we will set $\mathcal{V}\leftarrow
\mathcal{V}-\mathcal{V}_{n'}$, $\overline{\mathcal{V}}\leftarrow
\overline{\mathcal{V}}\bigcup\mathcal{V}_{n'}$, and start a new
sequence of tests for $\sum_{i\in\mathcal{V}_n}r_i\leq
C(\mathcal{V}_n), \forall n$. The above procedure iterates until we
find a set $\mathcal{V}$ such that $\sum_{i\in\mathcal{V}_n}r_i\leq
C(\mathcal{V}_n), \forall n$ or $\mathcal{V}=\varnothing$. In both
cases, we set $\mathcal{U}_1^*=\mathcal{V}$. The above algorithm is
summarized in Table \ref{table}.

\newpage

\begin{figure}
\psfrag{a}{$R_2^{(b)}$}\psfrag{b}{$\bar{R}_2^{(a)}$}\psfrag{c}{$\hat{R}_2^{(a)}$}\psfrag{d}{$r_1^{\rm
SUD}$}\psfrag{e}{{\small $C_{\rm MAC}(\mv{S}_1^{\rm JD},\mv{S}_2)$}}
\psfrag{f}{{\small $C_{\rm MAC}(\mv{S}_1^{\rm SD},\mv{S}_2)$}}
\psfrag{g}{{\small $C_{\rm MAC}(\tilde{\mv{S}}_1^{\rm
SD},\mv{S}_2)$}} \psfrag{h}{{\small $r_1^{\rm OMD}$}}
\psfrag{i}{$r_1^{\rm JD}$}\psfrag{l}{$\tilde{r}_1^{\rm
SD}$}\psfrag{m}{$r_1^{\rm SD}$}
\begin{center}
\scalebox{0.9}{\includegraphics*[70pt,200pt][540pt,590pt]{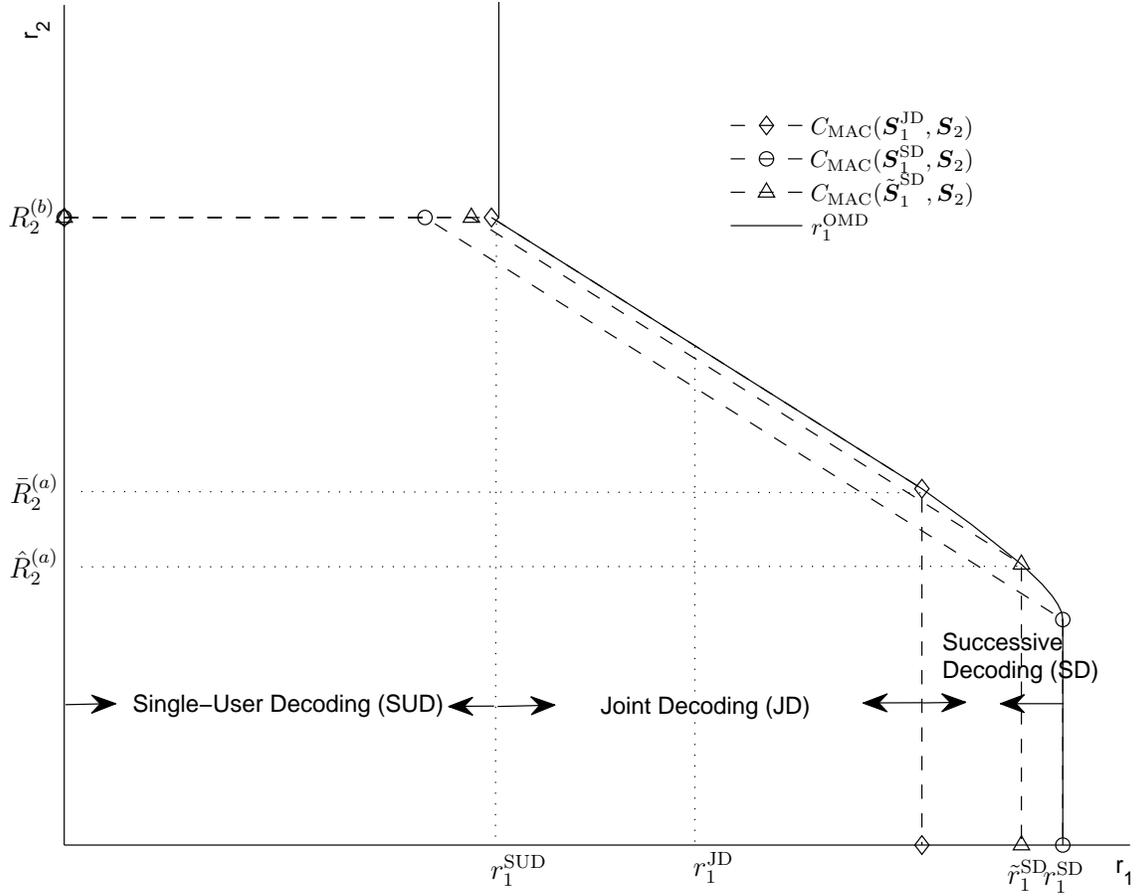}}
\end{center}
\caption{The maximum achievable rate of user 1 with OMD, $r_1$, as a
function of user 2's rate, $r_2$, for some fixed
$\mv{S}_2$.}\label{fig:rate region}
\end{figure}

\begin{figure}
\centering{
 \epsfxsize=5in
    \leavevmode{\epsfbox{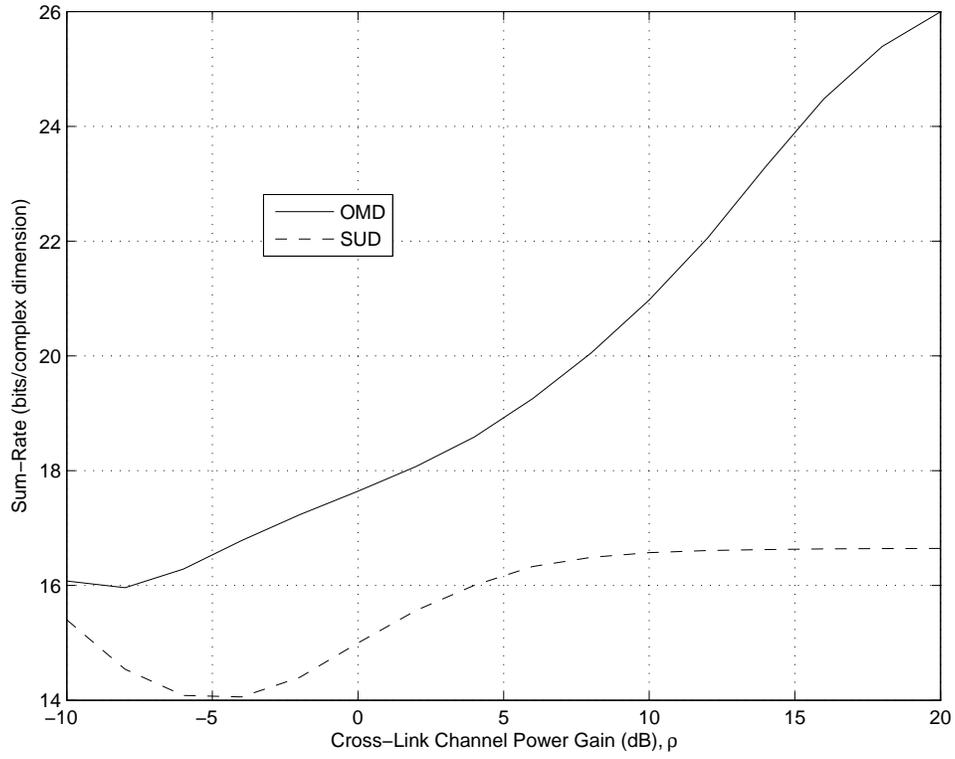}}}
\caption{The achievable sum-rate versus the average cross-link
channel power gain, $\rho$, for a MU-MIMO system with $K=2$,
$M_k=N_k=2, k=1,2$, and $P_1=P_2=100$.}\label{fig:sum rate}
\end{figure}

\begin{figure}
\centering{
 \epsfxsize=5in
    \leavevmode{\epsfbox{}}}
\caption{The achievable rate versus the average transmit power, $P$,
in a MIMO CR system with $M_k=N_k=2, k=1,2$, $P_1=10P$, and $P_2=P$,
for different decoding methods: Case (I) both PU and SU employ SUD;
and Case (II) PU employs SUD and SU employs OMD.}\label{fig:CR rate}
\end{figure}

\begin{thebibliography}{1}
\bibliographystyle{IEEEbib}

\bibitem{Cioffi} T. Starr, J. M. Cioffi, and P. J. Silverman, {\it Understanding
digital subscriber line technology}, Englewood Cliffs, NJ:
Prentice-Hall, 1999.

\bibitem{Andrews} J. Andrews et al., ``Rethinking information theory for mobile ad hoc networks,''
{\it IEEE Commun. Magazine}, vol. 46, no. 12, pp. 94-101, Dec. 2008.

\bibitem{Tarokh} N. Devroye, M. Vu, and V. Tarokh, ``Cognitive radio networks:
information theory limits, models and design,'' {\it IEEE Sig.
Proces. Magazine,} Nov. 2008.

\bibitem{Han} T. S. Han and K. Kobayashi, ``A new achievable rate region for
the interference channel,'' {\it IEEE Trans. Inf. Theory}, vol. 27,
no. 1, pp. 49-60, Jan. 1981.

\bibitem{Tse} R. Etkin, D. Tse, and H. Wang, ``Gaussian interference channel
capacity to within one bit,'' {\it IEEE Trans. Inf. Theory}, vol.
54, no. 12, pp. 5534-5562, Dec. 2008.

\bibitem{Jafar} V. R. Cadambe and S. A. Jafar, ``Interference alignment and
the degrees of freedom for the K user interference channel,'' {\it
IEEE Trans. Inf. Theory}, vol. 54, no. 8, pp. 3425-3441, Aug. 2008.

\bibitem{Yu02} W. Yu, G. Ginis, and J. Cioffi ``Distributed multiuser power
control for digital subscriber lines'', {\it IEEE J. Sel. Areas
Commun.}, vol. 20, no.5, pp. 1105-1115. Jun. 2002.

\bibitem{Ingram} M. F. Demirkol and M. A. Ingram, ``Power-controlled capacity for
interfering MIMO links,'' in {\it Proc. IEEE VTC}, vol. 1 pp.
187-191, 2001.

\bibitem{Tassiulas} J. H. Chang, L. Tassiulas, and F. Rashid-Farrokhi,
``Joint transmitter receiver diversity for efficient space division
multiaccess'', {\it IEEE Trans. Wireless Commun.}, vol. 1, pp.
16-27, Jan. 2002.

\bibitem{Blum03a} R. S. Blum, ``MIMO capacity with interference'', {\it IEEE J. Sel.
Areas Comm.}, vol. 21, pp. 793-801, Jun. 2003.

\bibitem{Chen06} B. Chen and M. J. Gans, ``MIMO communications in ad hoc
networks'', {\it IEEE Trans. Signal Proces.}, vol. 54, pp.
2773-2783, Jul. 2006.

\bibitem{Blum03b} S. Ye and R. S. Blum, ``Optimized signaling for MIMO interference
systems with feedback,'' {\it IEEE Trans. Sig. Process.}, vol. 51,
pp. 2839-2848, Nov. 2003.

\bibitem{Larsson} E. Larsson and E. Jorswieck, ``Competition and collaboration on
the MISO interference channel,'' in {\it Proc. Allerton Conference
on Communication, Control, and Computing}, 2007.

\bibitem{Liang} C. Liang and K. R. Dandekar, ``Power management in MIMO ad hoc
networks: a game-theoretic approach,'' {\it IEEE Trans. Wireless
Commun.}, vol. 6, no. 4, pp. 2866-2882, Apr. 2007.

\bibitem{Arslan} G. Arslan, M. F. Demirkol, and Y. Song, ``Equilibrium
efficiency improvement in MIMO interference systems: a decentralized
stream control approach,'' {\it IEEE Trans. Wireless Commun.}, vol.
6, no. 8, pp. 2984-2993, Aug. 2007.

\bibitem{Palomar} G. Scutari, D. P. Palomar, and S. Barbarossa, ``Competitive
design of multiuser MIMO systems based on game theory: a unified
view,'' {\it IEEE J. Sel. Areas Commun.} vol. 25, no. 7, pp.
1089-1103, Sep. 2008.

\bibitem{Verdu} S. Verdu, {\it Multi-user detection}, Cambridge University Press, 2003.

\bibitem{Wang08} N. Prasad, G. Yue, X. Wang, and M. Varanasi, ``Optimal successive
group decoders for MIMO multiple access channels,'' {\it IEEE Trans.
Inf. Theory}, vol. 54, no. 11, pp. 4821-4846, Nov. 2008.

\bibitem{Boyd} S. Boyd and L. Vandenberghe, {\it Convex optimization,} Cambridge University Press, 2004.


\bibitem{Cover} T. Cover and J. Thomas, {\it Elements of information theory,} New York: Wiley, 1991.

\bibitem{Rimoldi} B. Rimoldi and R. Urbanke, ``A rate-splitting approach to the Gaussian multiple-access channel,''
{\it IEEE Trans. Inf. Theory}, vol. 42, pp. 364-375, Mar. 1996.

\bibitem{Tse98} D. Tse and S. Hanly,``Multi-access fading channels-Part I: polymatroid structure, optimal resource
allocation and throughput capacities,'' {\it IEEE Trans. Inf.
Theory}, vol. 44, no. 7, pp. 2796-2815, Nov. 1998.

\bibitem{Zhang08} R. Zhang and Y. C. Liang, ``Exploiting multi-antennas for
opportunistic spectrum sharing in cognitive radio networks,'' {\it
IEEE J. S. Topics Sig. Proces.}, vol. 2, no. 1, pp. 88-102, Feb.
2008.

\end{thebibliography}
\end{document}